\begin{document}
\setcounter{page}{1001}
\issue{XXI~(2001)}

\title{Three-Party Integer Comparison and Applications}

\address{Corresponding author: Kewei Lv. State Key Laboratory of Information Security, Institute of Information Engineering, Chinese Academy of Sciences, Beijing, 100093, China}

\author{Jie Ma$^{a,b,c}$ \and Bin Qi$^{a,b,c}$ \and  Kewei Lv$^{a,b,c,}$\\
$^a$State Key Laboratory of Information Security\\
Institute of Information Engineering\\
Chinese Academy of Sciences\\
Beijing, 100093, China.\\
$^b$Data Assurance Communication Security Research Center\\
Chinese Academy of Sciences\\
Beijing, 100093, China.\\
$^c$School of Cyber Security\\
University of Chinese Academy of Sciences\\
Beijing, 100093, China.\\
majie@iie.ac.cn\\
qibin@iie.ac.cn\\
lvkewei@iie.ac.cn}\maketitle

\runninghead{J. Ma, B. Qi, K. Lv}{Three-Party Integer Comparison and Applications}

\begin{abstract}
  Secure integer comparison has been a popular research topic in cryptography, both for its simplicity to describe and for its applications. The aim is to enable two parties to compare their inputs without revealing the exact value of those inputs.

        In this paper, we highlight three-party integer comparison (TPIC), where a \emph{judge}, with no private input, wants to know the comparison result, while two \emph{competitors} hold secret integers to do privacy-preserving comparison. The judge actively obtains the result rather than passively waiting for it sent by a competitor.
        We give two TPIC constructions considering \emph{Mixed adversaries}, who have with different capabilities.
        One is secure against a semi-honest adversary with low computation and communication cost, while the other is secure against a malicious adversary.

        Basing on TPIC, we present multi-party comparisons through concrete applications, including a joint bidding scheme and a practical auction.
        Brief security proofs and analysis for the applications are presented. In comparison, our auction scheme is more efficient with lower cost, making it feasible in practice rather than a theoretical design.
        All the comparisons and application schemes run on top of blockchain requiring a constant number of rounds.
\end{abstract}

\begin{keywords}
Multi-Party Integer Comparison, Joint Bidding, Auction, Commitment, Constant-round, Blockchain
\end{keywords}

\section{Introduction}\label{sect-intro}
        Integer comparisons play an important role in the real life, a classic topic is the Millionaires problem \cite{Yao86}, bringing researches on two-party comparison. A large number of solutions have been proposed employing various methods like garbled circuits \cite{KSS09,AIKW15}, oblivious transfers \cite{CO15}, and homomorphic encryption \cite{Fisch,DGK,CEK}. One approach is to decompose integers into bitwise representation and evaluate them in a boolean circuit. The communications and computations are inefficient requiring many rounds of interaction. In contrast to bitwise comparisons, more efficient protocols were given in \cite{CEK,AbspoelBSV19,BST20,Eskeland20} to compare multiple bits simultaneously with a single ciphertext by a somewhat homomorphic encryption.

        The general case related to secure integer comparison includes multi-party ranking \cite{GMW87}, of which a concrete example is auction, such as English auction and uniform-price auction \cite{Bran}.
        The dramatic development of internet of things, 5G and electronic commerce, etc,  facilitates the research on online auctions \cite{GY,DGK,TianSMAA21,TejaM21,NiCXZZ21,SarencheSAA21}, which heavily rely on integer comparisons to determine one or more bidders with higher bids.
        Recently, the emergence of blockchain brings new idea to design comparison protocols as well auction schemes \cite{GY,Snc,KMSWP}. In the literature,
        Fischlin \cite{Fisch} proposed two-party bitwise comparison by boolean circuit, based on which, auction schemes were given considering semi-honest and malicious case.
        An oblivious third party A is required to provide a public/private key pair. All homomorphic computations in Fischlin's protocol are then performed under A's public key. Simulating A on the blockchain requires distributing the private key over multiple parties. As a result, one would need a secure and distributed computation of a Goldwasser-Micali key pair \cite{GM82}. Even for the case of RSA, this is complex and requires more rounds to do interactions \cite{BF1997}, making it impractical on blockchain.
        Blass and Kerschbaum \cite{BK2} proposed $Strain$, a protocol to implement auctions on top of blockchains that protects the bid privacy against fully-malicious parties. They improved the two-party comparison mechanism of Fischlin to achieve higher efficiency. Moreover, the authors mentioned that the proposed protocol leaks the order of bids to the public.
        Damg{\aa}rd et al. \cite{DGK} gave auction schemes by homomorphic encryption. They considered the very different scenario of comparing a secret value $m$ with a public integer $x$ for the fully malicious version while the semi-honest version addresses comparing secret inputs $m$ and $x$. They described their special case consisting of a server, an assisting server and a client with harsh condition that the client must be able to send his value and go offline, whereafter the other two parties should be able to do the computations together. The offline requirement is not very suitable for blockchain architecture pursuing freedom in joining and leaving the network at any time.
        The auction scheme \cite{MQL} used the two-party multi-bit comparison in \cite{CEK} considering a semi-honest adversary. Later, it was improved in \cite{MQL2}, which, however, required help of the auctioneer to forward messages in more rounds.

        We observe that two-party comparisons are pivots of most existing auctions. However, there are cases, where two-party protocols might not work well. Consider such a scenario:  a buyer (with balance $\alpha$) wants to buy a toy (of price $\beta$) from a seller via blockchain. The buyer needs to convince the miners that $\alpha\!\ge\!\beta$, while guaranteeing privacy of both $\alpha$ and $\beta$.
        By a two-party comparison protocol, the buyer obtains the comparison result and sends it to the miners with a proof, resulting in additional complexity. Can we design a protocol to enable the miners to actively obtain the result rather than passively receiving it from the buyer?

        \textit{Our contributions}. In this paper, we highlight the above question and address it by three-party integer comparison (TPIC) protocols.
        In TPIC, two \emph{competitors} hold their secret integers to perform comparisons, while a \emph{judge} with no private input learns the comparison result.
        Two \emph{constant-round} constructions are given against semi-honest and malicious adversary respectively. 
        One is built on \cite{CEK} for semi-honest competitors, and has low communication and computation cost (it can be improved by the method in \cite{BST20,Eskeland20} to reduce one round), while the other is built on the technique of Fischlin \cite{Fisch} for malicious competitors.
        In both, the judge is assumed to be semi-honest and non-colluding with competitors to maintain the comparison result.
        As applications of TPIC, we give schemes for joint bidding and online auction, which run on top of blockchain in a constant number of rounds.

        It is worth mentioning that the proposed joint bidding scheme runs in 4 blocks and is better than \cite{MQL}, where the number of rounds required is equal to the number of firms.
        Further, our auction has lower cost compared with related schemes \cite{DGK,Fisch} and is feasible in practice. It runs in a constant number of rounds (4 blocks) and does not require an intermediary as in \cite{MQL2}. It also avoids the issue mentioned in \cite{BK2}: the order of bids is revealed to all the participants. Indeed, the order of the private inputs is regarded as important information in some scenarios like joint bidding \cite{CL02}. In our auction, the judge can obtain the results on its own initiative, and is the only party to get the order.
        Complexity analysis is given, while there is no analysis in \cite{MQL2}.

        \textbf{Organization.} This paper is organized as follows. In section~\ref{sect-intro}, an introduction towards the research background as well as our contributions are presented. In section~\ref{sect-prelimi}, we give a description for preliminaries including the ideal functionality, the adversary model, as well as the commitment scheme. In section~\ref{sect-TPIC-scheme}, three-party integer comparison protocols are given. The applications to illustrate multi-party comparisons are given in section~\ref{sect-Appli}, consisting of joint bidding and auction schemes. The complexity analysis of the auction is shown in section~\ref{sect-complexity}. Finally, conclusions are given in section~\ref{sect-conclu}.

\section{Preliminaries}\label{sect-prelimi}
        In this section, we present the system model, ideal functionality and security definition, together with a commitment scheme used in the multi-party applications. 

        \textbf{Convention.} When we say someone ``broadcasts" something, it means that he publishes his message on the blockchain via a transaction.
        We denote by $\lambda$ as the security parameter involving in a cryptosystem.
  \subsection{System Model}\label{system-model}
        We give a general model for comparisons in this paper. The participants are divided into two parts: one is called \emph{judge} with no private input like the auctioneer in an auction, the other is called \emph{competitors} which is a set of participants holding private inputs like bidders. The judge wants to obtain the sorted order of the secret inputs, which the owners does not wants to reveal. Assume there are $N$ ($N\!\ge\!2$) competitors and one judge in the following. In TPIC, $N\!=\!2$.

        \noindent\textbf{Ideal Functionality.}
         Our integer comparison scheme emulates a trusted third party (TTP) that first receives private inputs from all competitors by an authenticated channel. TTP then compares all the inputs and sends all the comparison results $cmp_{i,j}$ to the judge. If a competitor $\mathbf{B}$ submits a complaint towards $\mathbf{B'}$, TTP reveals bid information of $\mathbf{B}$ to the judge and announces the re-comparison result between $y$ and $y'$ to the public. The ideal functionality $\mathcal{F}$ of a multi-party integer comparison protocol is presented in Algorithm~\ref{Alg-IdF}.

         \IncMargin{1em}
            \begin{algorithm}[t]
                \For {$i=1$ \KwTo $N$}
                   {Each competitor $\mathbf{B_i}$ sends his input $y_i$ to TTP;}
                \For {$i,j=1$ \KwTo $N,j\ne i$}
                   {TTP computes $cmp_{i,j}=1$ if $y_i\ge y_j$; otherwise, $cmp_{i,j}=0$.}

                   TTP sends to the judge:~$\{cm\!p_{i,j}|i,\!j\!\!=\!\!1,\!\dots\!,N,i\!\!\ne\!\!j\}$.

                \If {A competitor $\mathbf{B}$ submits a complaint towards $\mathbf{B'}$}
                        {TTP sends $y$ to the judge, publishes $cm\!p$ to the judge and all competitors.}
                \caption{Ideal Functionality $\mathcal{F}$ of Comparisons}
                \label{Alg-IdF}
              \end{algorithm}
              \DecMargin{1em}

        \noindent\textbf{Adversary Model.}
        It is assumed that all participants run in probabilistic polynomial time (PPT). \emph{Mixed adversary} $\mathcal{A}\!=\!(\mathcal{A}_1,\mathcal{A}_2)$ is considered corresponding to the participant roles. $\mathcal{A}_1$ and $\mathcal{A}_2$ are non-colluding and have different capabilities. $\mathcal{A}_1$ is semi-honest and can corrupts the judge to eavesdrop the input privacy of the competitors. $\mathcal{A}_2$ is either semi-honest or malicious (discussed separately) and can adaptively corrupt a subset of competitors. If $\mathcal{A}_2$ is semi-honest, a corrupted competitor can:
        1) attempt to obtain the inputs of the honest competitors but perform protocol correctly,
        2) use false input, which means that the committed input is inconsistent with the input in the ciphertexts during the comparison,
        3) submit a complaint towards the comparison result at any time,
        4) quit the comparison during execution of the scheme at arbitrary time.
        We note here $\mathcal{A}_2$, when is semi-honest, is different from the general semi-honest assumption as some malicious behaviors are allowed, like submitting false complaints, quitting the protocol at any time.

        \noindent\textbf{Network Model.}
        A weakly synchronous blockchain is used as an broadcast channel to publish messages. Each participant can timely access and put messages on the blockchain at any time. 

        Let $\mathcal P$ be a set of participants, who perform a protocol $\Pi$ to compute an ideal function $\mathcal{F}$ given inputs $In_{\mathcal P}$ and outputs $Out_{\mathcal P}$. For a corrupted subset $I$ of $\mathcal P(I\subseteq \mathcal P)$, the view of $I$ is denoted by $\mathbf{VIEW}^{\Pi}_{I}(In_{\mathcal P},Out_I,\phi_I)$, where $\phi_I$ contains messages and random numbers selected by $P_{i,i\in I}$ and related information during running $\Pi$. Let $\mathbf{Sim}_I$ be a simulator that takes the inputs of all $P_{i,i\in I}$ $(In_I)$ and the outcome of $\mathcal{F}$ received by $P_i$ $(Out_I)$ to produce a transcript of the protocol.

        \begin{definition}\label{def1}
                A protocol $\Pi$ is secure against $\mathcal{A}\!=\!(\mathcal{A}_1,\mathcal{A}_2)$, if there exist a PPT simulator $\mathbf{Sim}_I$ such that the distributions, $\mathbf{Sim}_I$ and $\mathbf{VIEW}^{\Pi}_{I}$, are computationally indistinguishable (simply denoted by symbol `$\approx_c$'), that is, $$\mathbf{Sim}_I(In_I,\mathcal{F}\!(In_\mathcal P),Out_I,\phi_I)\!\approx_c\!\! \mathbf{VIEW}^{\Pi}_{I}(In_\mathcal P,\mathcal{F}\!(In_\mathcal P),Out_I,\phi_I).$$
            \end{definition}

 \subsection{Commitment Scheme with Deposits}\label{commitment}
        In our schemes, the blockchain-based commitment scheme (CS) with deposits in \cite{ADMM} is used to bind the input with its owner, forcing the owner to follow the protocol.

        Assume a committer $\mathbf{C}$ wants to commit $y$ to a recipient $\mathbf{R}$ via blockchain.
        Initially, a hash function $\mathcal{H}$ is negotiated by the committer $\mathbf{C}$ and recipient $\mathbf{R}$.
        At the commitment phase, $\mathbf{C}$ computes $h=\mathcal{H}(S\|y)$ with some randomness $S$ as a commitment to $y$. Then he broadcasts a transaction $CS.Commit$ on blockchain revealing the commitment $h$ with some deposits $v$, which is contained in some transaction possessed by $\mathbf{C}$.
        The deposits will be redeemed by $\mathbf{C}$ during the opening phase with a transaction $CS.Open$ if he opens the correct commitment. Otherwise, the deposits will be claimed by $\mathbf{R}$ as punishment for $\mathbf{C}$.
        Afterwards, a transaction $CS.Fuse$ is created by $\mathbf{C}$ and sent to $\mathbf{R}$ with his signature on it enabling $\mathbf{R}$ to claim the deposits.

        At the opening phase, $\mathbf{C}$ reveals $S$ and $y$ via a transaction $CS.Open$ to redeem his deposits. If the transaction $CS.Open$ is not recorded on the blockchain after a deadline, then $\mathbf{R}$ broadcasts the transaction $CS.Fuse$ to gain the deposits of $\mathbf{C}$, or $\mathbf{R}$ learns $S$ and $y$ if $CS.Open$ appears on the blockchain then $\mathbf{C}$ redeems his deposits.

        For an honest committer, the deposits will be redeemed after opening the commitment. If the committer can not correctly open the commitment, then the recipient will claim the deposits. CS satisfies the $binding$ property of a commitment scheme since hash function is collision-resistant while $hiding$ is guaranteed by unpredictability of the hash function.

\section{Three-Party Integer Comparison Protocol}\label{sect-TPIC-scheme}
        In this section, two TPIC protocols are given against semi-honest and malicious $\mathcal{A}_2$, respectively.

  \subsection{Semi-honest Competitors}
        In this subsection, a secure TPIC against semi-honest adversary $\mathcal{A}_2$ is given. We improve the two-party integer comparison protocol in \cite{CEK} and give security proofs. Assume that two competitors, $P_1$ and $P_2$, holding $\alpha$ and $\beta$, respectively, want to prove to a judge $\mathbf{A}$ which one of $\alpha$ and $\beta$ is larger without uncovering them.

        The strategy involves three homomorphic public key encryption (PKE) schemes, by which to do computations under ciphertexts:
        1)$PKE_1\!=\!(\textbf{Gen}_1, \textbf{Enc}_1, \textbf{Dec}_1)$ from \cite{CEK} is semantically secure with threshold property that if $m_1\!+\!m_2\!<\!d$, then $\textbf{Enc}_1(m_1)^{b^{m_2}}\!=\!\textbf{Enc}_1(m_1\!+\!m_2)$, otherwise, $\textbf{Enc}_1(m_1)^{b^{m_2}}\!=\!\textbf{Enc}_1(0)$.
        2)$PKE_2\!=\!(\textbf{Gen}_2, \textbf{Enc}_2, \textbf{Dec}_2)$ with message space $\mathcal{M}_{2}$ is multiplicatively homomorphic.
        3)$PKE_3\!=\! (\textbf{Gen}_{\oplus}, \textbf{Enc}_{\oplus}, \textbf{Dec}_{\oplus})$ is semantically secure and additively homomorphic with message space $\mathcal{M}_{\oplus}$ satisfying $\textbf{Enc}_\oplus(\mathcal{M}_{\oplus})\subseteq\mathcal{M}_2$.
        An efficient candidate for $PKE_3$ is \cite{ST} and proper candidates for $PKE_2$ can be RSA or ElGamal encryption.

        Let
        $\alpha=\alpha_{k-1}d^{k-1}+\alpha_{k-2}d^{k-2}+\cdots+\alpha_1 d+\alpha_0$ and $\beta=\beta_{k-1}d^{k-1}+\beta_{k-2}d^{k-2}+\cdots+\beta_1d+\beta_0$, where $0\le\alpha_i,\beta_i<d$.
        If exactly one of the $k$ Boolean expressions below is \textbf{True}:
            $(\alpha_{k-1}>\beta_{k-1})$,
            $(\alpha_{k-1}=\beta_{k-1})\wedge(\alpha_{k-2}>\beta_{k-2})$,
            $(\alpha_{k-1}=\beta_{k-1})\wedge(\alpha_{k-2}=\beta_{k-2})\wedge(\alpha_{k-3}>\beta_{k-3})$,
            $\cdots$,
            $(\alpha_{k-1}=\beta_{k-1})\wedge(\alpha_{k-2}=\beta_{k-2})\wedge\cdots\wedge(\alpha_0>\beta_0)$,
        then $\alpha>\beta$; otherwise $\alpha\le\beta$.

        Specifically,
        $P_i(i=1,2)$ runs $\mathbf{Gen}_i(1^\lambda)$ to generate the key pair $(\mathcal{PK}_i,\mathcal{SK}_i)$ while the judge $\mathbf{A}$ runs $\textbf{Gen}_{\oplus}(1^\lambda)$ to get the key pair $(\mathcal{PK}_3,\mathcal{SK}_3)$. They publish their public keys $\mathcal{PK}_i$ to each other where $\mathcal{PK}_1=(n,b,d,g,h,u)$. In the following, let $\ell=k-1,\dots,0$ and the notation $x\leftarrow_\$S$ denotes that a value $x$ is chosen uniformly at random from a set $S$.
        \begin{enumerate}
            \item $P_1$ selects $r_{1,\ell}\!\!\leftarrow_\$\!\!\{1,\dots,2^u\!-\!1\}$ and computes $C_\ell=g^{b^{(d-\alpha_\ell-1)}}h^{r_{1,\ell}}$, then he publishes $C$ on blockchain, where $C\!\!=\!\!(C_{k-1}$, $\dots,C_0)$.
            \item $P_2$ selects $r_{2,\ell}\!\!\leftarrow_\$\!\!\{1,\dots,2^u\!-\!1\}$, $s_\ell\!\!\leftarrow_\$\!\!\{1,\dots,b^{d}\!-\!1\}$, $s.t.,s_\ell\!\!\not\equiv\!\!0\bmod b$ and computes $D_\ell\!\!=\!\!C_\ell^{b^{\beta_\ell}}g^{s_\ell}h^{r_{2,\ell}}$; $A_{2,k-1}=\textbf{Enc}_2(\textbf{Enc}_{\oplus}(s_{k-1}))$,
                $A_{2,k-2}\!\!=\!\!\textbf{Enc}_2(\textbf{Enc}_{\oplus}(\beta_{k-1}\|s_{k-2}))$,
                $\dots$,
                $A_{2,0}\!\!=\!\!\textbf{Enc}_2(\textbf{Enc}_{\oplus}(\beta_{k-1}\|\beta_{k-2}\|\cdots\|s_{0}))$. She puts $(D$, $A_2)$ on blockchain, where $D=(D_{k-1},\dots,D_0)$ and $A_2=(A_{2,k-1}$, $\dots,A_{2,0})$.
            \item $P_1$ computes $w_\ell\!\!=\!\!\textbf{Dec}_1(D_\ell)$; $A'_{1,k-1}\!\!=\!\!\textbf{Enc}_2(\textbf{Enc}_{\oplus}(-w_{k-1}))\cdot$ $ A_{2,k-1}$, $A'_{1,k-2}=\textbf{Enc}_2(\textbf{Enc}_{\oplus}(-\\(\alpha_{k-1}\|w_{k-2})))\cdot A_{2,k-2}$, \dots,
                $A'_{1,0} = \textbf{Enc}_2(\textbf{Enc}_{\oplus}(-(\alpha_{k-1}\|\alpha_{k-2}\|\cdots\|w_{0})))\cdot A_{2,0}$.
                 Then he blinds $A_{1,\ell}'$ by computing $A_{1,\ell}\!=\!(A'_{1,\ell})^{r_\ell}$ for $0\!\ne\! r_\ell$ $\leftarrow_\$\mathcal{M}_{\oplus}$. He gets $A_1=(A_{1,k-1'},\dots,A_{1,0'})$ by shuffling $(A_{1,k-1},\dots,A_{1,0})$ via a random permutation $\pi_1$. He puts $A_1$ on blockchain.
            \item $P_2$ computes $A_{0,\ell}'=\textbf{Dec}_2(A_{1,\ell})$ and blinds $A_{0,\ell}'$ by computing $A_{0,\ell}=(A'_{0,\ell})^{r_\ell'}$ for $0\ne r_\ell'\leftarrow_\$\mathcal{M}_{\oplus}$. Then, she shuffles $(A_{0,k-1},\dots,A_{0,0})$ to $A_0=(A_{0,k-1'},\dots,A_{0,0'})$ by a random permutation $\pi_2$. She publishes $A_0$ on blockchain.
            \item $\mathbf{A}$ computes $m_\ell=\textbf{Dec}_{\oplus}(A_{0,\ell})$. If for all $\ell$, $m_\ell\ne0$, then, $\alpha\ge\beta$; otherwise, $\alpha\!<\!\beta$.
        \end{enumerate}

        Note that the string $\beta_{k-1}\|\beta_{k-2}\|\cdots\|s_{0}$ is regarded as an integer to be encrypted, which can be achieved by choosing appropriate message spaces.

        \textbf{Correctness.}
        $P_1$ computes $w_\ell\!\!=\!\!(b^{d-\alpha_\ell-1+\beta_\ell}\!+\!s_\ell)\bmod b^d$. If $\beta_\ell\!\!\ge\!\!\alpha_\ell+1$ holds, then $w_\ell\!\!=\!\!s_\ell$.
        Since the encryptions, $PKE_2$ and $PKE_3$, are homomorphic, we have
        $A_{1,\ell}\!\!=\!\!\textbf{Enc}_2[(\textbf{Enc}_{\oplus}(\beta_{k-1}\|\!\cdots\!\|\beta_{\ell+1}\|s_{\ell})\cdot$ $\textbf{Enc}_{\oplus}(\!-(\alpha_{k-1}\|\!\!\cdots\!\!\|\alpha_{\ell+1}\| w_{\ell})))^{r_\ell}\!]$
        and
        $A_{0,\ell}\!\!=\!\!\textbf{Enc}_{\oplus}[r_\ell r_\ell'(\beta_{k-1}\|\!\cdots\!\|\beta_{\ell+1}\|s_{\ell}\!-\!\alpha_{k-1}\|\cdots\|\alpha_{\ell+1}\|w_{\ell})]$.
        If there exists only one $\ell$, $s.t.$, $m_\ell=0$, then
        $(\alpha_{k-1}=\beta_{k-1})\wedge(\alpha_{k-2}=\beta_{k-2})\wedge\cdots\wedge(\alpha_{\ell+1}=\beta_{\ell+1})\wedge(w_\ell=s_\ell)$ holds, and the $\ell$-th Boolean expressions is \textbf{True}, $i.e.$,
        $\alpha<\beta$. Otherwise, if $m_\ell\ne0$ for all $\ell$, then $\alpha\ge\beta$ holds.

        \begin{theorem}\label{thm1}
                 Assume that the three PKEs are secure. Then, the above TPIC protocol is secure against mixed adversary $\mathcal{A}\!=\!(\!\mathcal{A}_1,\mathcal{A}_2\!)$, where $\mathcal{A}_1$ and $\mathcal{A}_2$ are non-colluding and semi-honest, $\mathcal{A}_1$ might corrupt the judge $\mathbf{A}$, $\mathcal{A}_2$ might corrupt a competitor $P_{\!1}$ or $P_{\!2}$. I.e., the privacy, $\alpha$ and $\beta$, of $P_{\!1}$ and $P_{\!2}$ respectively, is preserved.
            \end{theorem}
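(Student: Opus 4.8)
The plan is to exhibit, for each possible corruption, a PPT simulator that produces a transcript indistinguishable from the real view, relying only on the semantic security of the three PKE schemes. Since the mixed adversary $\mathcal{A}=(\mathcal{A}_1,\mathcal{A}_2)$ is non-colluding, I will treat three separate cases: (i) $\mathcal{A}_1$ corrupts the judge $\mathbf{A}$; (ii) $\mathcal{A}_2$ corrupts $P_1$; (iii) $\mathcal{A}_2$ corrupts $P_2$. In each case the simulator $\mathbf{Sim}_I$ is given the corrupted party's input, its prescribed output from $\mathcal{F}$, and its randomness, and must fabricate everything the honest parties would have broadcast.

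First I would handle the judge. Here $\mathcal{A}_1$ sees $\mathcal{SK}_3$ and the broadcast values $C$, $D$, $A_2$, $A_1$, $A_0$; its output is the single bit $[\alpha\ge\beta]$. The simulator knows which of the $k$ slots should decrypt to $0$ (if the bit is $0$, pick a uniformly random index $\ell^*$ to be the zero slot — this matches the real distribution because $P_1$'s shuffle $\pi_1$ and $P_2$'s shuffle $\pi_2$ randomize the position; if the bit is $1$, no zero slot). It then sets $A_0$ by encrypting $0$ in slot $\ell^*$ and uniformly random nonzero elements of $\mathcal{M}_\oplus$ elsewhere, which is exactly the real distribution since each real $m_\ell$ is a nonzero-scalar multiple $r_\ell r'_\ell(\cdots)$ and the blinding factors $r_\ell,r'_\ell$ are uniform. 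The earlier broadcasts $C,D,A_2,A_1$ are ciphertexts under $\mathcal{PK}_1,\mathcal{PK}_2$ whose secret keys the judge does not hold, so the simulator replaces them with encryptions of $0$; a standard hybrid argument over the semantic security of $PKE_1$ and $PKE_2$ shows indistinguishability. (One must check $C_\ell=g^{b^{d-\alpha_\ell-1}}h^{r_{1,\ell}}$ is pseudorandom given only $\mathcal{PK}_1$; this follows from semantic security of $PKE_1$ and the fact that $r_{1,\ell}$ is fresh.)

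Next, the case $\mathcal{A}_2$ corrupts $P_1$. Now the simulator knows $\alpha$, $\mathcal{SK}_1$, and $P_1$'s randomness, and must simulate $P_2$'s broadcasts $(D,A_2)$ and $A_0$, and the judge's (nothing — judge broadcasts nothing to competitors). The key observation is that $P_1$ legitimately learns $w_\ell=\mathbf{Dec}_1(D_\ell)$, and by construction $w_\ell=s_\ell$ whenever $\beta_\ell\ge\alpha_\ell+1$, and $w_\ell=(b^{d-\alpha_\ell-1+\beta_\ell}+s_\ell)\bmod b^d$ otherwise — in both cases $w_\ell$ is (statistically close to) uniform over $\{1,\dots,b^d-1\}$ because $s_\ell$ is, so the simulator can sample $D_\ell$ as $C_\ell^{b^{\beta'_\ell}}g^{s'_\ell}h^{r'}$ for arbitrary dummy digits — but more carefully, since $\mathcal{A}_2$ is semi-honest and does not see $\mathcal{SK}_3$, the simulator replaces the inner $\mathbf{Enc}_\oplus$ payloads in $A_2$ and $A_0$ by encryptions of $0$ and invokes semantic security of $PKE_3$. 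The only value $P_1$ decrypts downstream is via $\mathbf{Dec}_2$ applied to $A_{1,\ell}$, which are things $P_1$ itself produced. The case $\mathcal{A}_2$ corrupts $P_2$ is symmetric: $P_2$ knows $\beta$, $\mathcal{SK}_2$, decrypts $A_{1,\ell}$ to get $A'_{0,\ell}=\mathbf{Enc}_\oplus(\cdots)$, but these are $PKE_3$ ciphertexts she cannot open, so the simulator substitutes encryptions of $0$; $C$ is a vector of $PKE_1$ ciphertexts she cannot open, substituted likewise.

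The main obstacle I anticipate is case (ii)/(iii), verifying that nothing the corrupted competitor can \emph{decrypt with a key it legitimately holds} leaks the other party's input. Concretely: $P_1$ holds $\mathcal{SK}_1$, so one must argue that $D_\ell$, even after decryption to $w_\ell$, reveals nothing about $\beta_\ell$ beyond what the output allows — this is where the randomizer $s_\ell$ (uniform in $\{1,\dots,b^d-1\}$, $s_\ell\not\equiv 0\bmod b$) does its work, and the argument is statistical, not computational, so I would isolate it as a separate claim about the distribution of $w_\ell$. Similarly, $P_2$ holds $\mathcal{SK}_2$ and decrypts $A_{1,\ell}$; here the crucial point is that $P_1$ re-randomizes via the exponent $r_\ell\leftarrow_\$\mathcal{M}_\oplus$ and permutes via $\pi_1$, so $A'_{0,\ell}$ is a fresh $PKE_3$-ciphertext of a scalar-blinded plaintext in a shuffled position, carrying no usable information without $\mathcal{SK}_3$. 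Assembling these pieces, each simulator differs from the real view only through ciphertexts under a key the distinguisher lacks, so a routine sequence of hybrids — one per PKE scheme invoked — completes the proof; I would state explicitly which hybrid step uses which scheme's security.
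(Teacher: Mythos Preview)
Your approach is essentially identical to the paper's sketch proof: construct a simulator for each of the three corruption cases, replace honest-party broadcasts with dummy encryptions under whichever PKE the corrupted party lacks the secret key for, and for the judge make the simulated $A_0$ consistent with the output bit $\mathcal{F}(\alpha,\beta)$. Your treatment is in fact more careful than the paper's on the $P_1$-corruption case: you correctly observe that $P_1$ holds $\mathcal{SK}_1$ and hence indistinguishability of $D$ must rest on the statistical masking by $s_\ell$ rather than on semantic security of $PKE_1$, whereas the paper's sketch simply simulates $D$ by encrypting zeros and appeals to semantic security of $PKE_1$.
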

        \begin{proof}[Sketch Proof]
          By Definition~\ref{def1}, we construct simulators for the three PPT parties. Here $\Pi$ is TPIC, and the ideal functionality $\mathcal{F}$ computes the relationship between $\alpha$ and $\beta$, $i.e., \mathcal F(\alpha,\beta)=1,$ if $\alpha>\beta$, otherwise $\mathcal F(\alpha,\beta)=0$.

        Given the public keys, $\mathcal{SK}_1$ and $\alpha$,
        $\mathbf{Sim}_1$ encrypts $\alpha$ by $PKE_1$ as $C$ which is identical to the real. Unknowing $\beta$, $\mathbf{Sim}_1$ just encrypts $0^{|C|}$ via $PKE_1$ to get $D_s$. Since $PKE_1$ is semantically secure, $D_s$ is computationally indistinguishable from $D$ in the real.
        $\mathbf{Sim}_1$ also encrypts numbers randomly chosen from $\mathcal{M}_{\oplus}$ to obtain $A_2$ which is indistinguishable from the real due to the (semantical) security of $PKE_3$ and $PKE_2$. Then $\mathbf{Sim}_1$ computes $A_1$ as the real execution, which is also computationally indistinguishable. To simulate $A_0$, $\mathbf{Sim}_1$ encrypts random numbers chosen from $\mathcal{M}_{\oplus}$ by semantically secure $PKE_3$. Thus $A_0$ is also computationally indistinguishable. So we have $\mathbf{Sim}_1(\alpha,\mathcal{SK}_1,\phi_1)\!\!\approx_c\!\! \mathbf{VIEW}^{\Pi}_{1}((\alpha,\beta),\mathcal{SK}_1,\phi_1)$. Note that $P_1$ has no output.

        $\mathbf{Sim}_2$ and $\mathbf{Sim}_A$ can be constructed similarly to $\mathbf{Sim}_1$. Note that for $\mathbf{Sim}_A$ given the comparison result $\mathcal{F}\!(\alpha,\beta)$, $A_0$ should be consistent with the result. If $\alpha\!<\!\beta$, there should exist only one $\ell,s.t.,\textbf{Dec}_{\oplus}(A_{0,\ell})\!=\!0$. In this case, $\mathbf{Sim}_A$ encrypts 0 and $(\ell\!\!-\!\!1)$ non-zero random numbers by $PKE_3$, then shuffles them. Otherwise, $\mathbf{Sim}_A$ encrypts $\ell$ non-zero random numbers then shuffles them. Similarly, we have $\mathbf{Sim}_2(\beta,\mathcal{SK}_2,\phi_2)\!\!\approx_c\!\!\mathbf{VIEW}^{\Pi}_{2}((\alpha,\beta),\mathcal{SK}_2,\phi_2)$ for $P_2$ with no output, and $\mathbf{Sim}_A(\mathcal{F}(\alpha,\beta),\mathcal{SK}_3,Out_A,\phi_3)\approx_c \mathbf{VIEW}^{\Pi}_{1}(\mathcal{F}(\alpha,\beta),\mathcal{SK}_3,Out_A,\phi_3)$ for $\mathbf{A}$ who outputs the comparison result.
        \end{proof}

   \subsection{Malicious Competitors}
        Considering malicious $\mathcal{A}_2$ who may corrupt $P_1$ or $P_2$, we improve the comparison protocol of Fischlin \cite{Fisch}, where Goldwasser-Micali (GM) \cite{GM82} encryption is used. The public key is $pk=(n,z)$ while the private key is $sk\!=\!\frac{(p\!-\!1)(q\!-\!1)}{4}$, where $n\!=\!pq$, $p\!\equiv\!q\!\equiv\!3\bmod 4$, $z\!\equiv\!-1\bmod n$. One can encrypt a bit $b$ by computing $c\!=\!r^2z^b\!\bmod\!n$ with randomly chosen $r$ from $Z_n^\ast$ and decrypt $c$ by computing $1\!-\!c^{sk}\!\bmod\!n$. GM is homomorphic for encryptions of two bits $b_1$ and $b_2$: $Enc(b_1)\cdot Enc(b_2)\!=\!Enc(b_1\oplus b_2)$ (\emph{XOR}), $Enc(b_1)\cdot z\!=\!Enc(1\!-\!b_1)$ (\emph{NOT}). For a GM ciphertext $c$, re-encryption is $ReEnc(c)\!=\!c\!\cdot\!Enc(0)$.

        \emph{AND-homomorphic}. GM encryption can be modified to support Boolean AND homomorphic following \cite{SYY99}, by which, a single bit is encrypted to $\iota$-element sequence, where $\iota$ is the soundness parameter. $Enc^{AND}(1)\!=\!(Enc(0),\dots,Enc(0))$ and $Enc^{AND}(0)\!=\!(Enc(a_1), \dots, Enc(a_\iota))$  for $\iota$ randomly chosen bits. The AND-decryption outputs $1$ if each element in the sequence is decrypted to $0$; otherwise, it  outputs $0$. This decryption is correct with probability $1\!-\!2^{-\iota}$. For two ciphertexts $Enc^{AND}(b_1)\!=\!(c_1,\dots,c_\iota)$ and $Enc^{AND}(b_2)\!=\!(c'_1,\dots,c'_\iota)$,  $Enc^{AND}(b_1)\!\cdot\!Enc^{AND}(b_2)\!=\!(c_1\!\cdot\!c'_1,\dots,c_\iota\!\cdot\!c'_\iota)\!=\!Enc^{AND}(b_1\!\wedge\!b_2)$ (\emph{AND}).
        Re-encryption for AND-encryption is defined as $ReEnc^{AND}(c_1,\dots,c_\iota)\!=\!(ReEnc(c_1),\dots,ReEnc(c_\iota))$.

        \emph{AND-embedding}. An existing GM ciphertext $c$ of bit $b$ can be embedded to AND-ciphertext $Enc^{AND}(b)\!=\!(c_1,\dots,c_\iota)$ without decryption. For $\iota$ randomly chosen bits $a_1,\dots,a_\iota$, if $a_i\!=\!1$, then set $c_i\!=\!Enc(0)$; otherwise, set $c_i\!=\!Enc(0)\!\cdot\! c\!\cdot\! z\bmod n$. Such embedding is correct with probability $1\!-\!2^{-\iota}$.

        Fischlin \cite{Fisch} proposed secure protocol to compare two integers against semi-honest adversaries, which is a bit-wise comparison. Given $\eta$-bit integers $\alpha$ and $\beta$ with bit representations $\alpha\!=\!\alpha_1\dots\alpha_\eta$ and $\beta\!=\!\beta_1,\dots,\beta_\eta$, one can compute $\alpha\!>\!\beta$ by evaluating Boolean circuit $F\!=\!\bigvee_{\ell=1}^\eta(\alpha_\ell\!\wedge\!\neg\beta_\ell\wedge\bigwedge_{u=\ell+1}^\eta(\alpha_u\!=\!\beta_u)).$
        $F\!=\!1$ if and only if $\alpha\!>\!\beta$. $\bigvee_{\ell=1}^\eta$ is XOR operation and exactly one term will be 1 if $\alpha\!>\!\beta$; otherwise, all terms will be 0. Moreover, $(\alpha_u\!=\!\beta_u)$ equals $\neg(\alpha_u\!\oplus\!\beta_u)$. Thus, $F$ can be homomorphically evaluated using GM encryption.

        We improve the protocol of Fischlin \cite{Fisch} to be a secure three-party scheme against malicious $P_{\!1}$ and $P_2$. $\mathbf{A}$ gets the comparison result, while $P_{\!1}$ and $P_{\!2}$ protect their private inputs $\alpha$ and $\beta$. It is assumed that $\mathbf{A}$ will not collude with $P_1$ or $P_2$.
        Except for GM encryption, RSA is also used in our scheme. $\mathbf{A}$ generates a GM key pair and publishes the public key before the comparison. For $i\!=\!1,2$, $P_{i}$ generates RSA public/ private key pair $(e_i,d_i)$. The private inputs are first encrypted by GM, then encrypted by RSA, such double encryption is called  GM-RSA encryption in the following. Since, RSA is multiplicatively homomorphic, the homomorphic properties (NOT,XOR,AND) of GM will not be affected involving only multiplication. In the following, encrypting a sequence means encrypting each element separately.
            \begin{enumerate}
              \item $P_i$ bit-wisely encrypts its private input by GM into $C_i\!=\!(C_{i,1},\dots,C_{i,\eta})$, then, encrypts $C_i$ by RSA into $D_i\!=\!(C_i)^{e_i}$. To prove the knowledge of the private input, a non-interactive zero-knowledge (NIZK) proof $\pi_i$ given below is added. Finally, $P_i$ publishes $\langle e_i,D_i,\pi_i\rangle$ on blockchain.
              \item $P_2$ encrypts $C_2$ into $D_{2,1}\!=\!C_2^{e_1}$ using $e_1$ of $P_1$ with a NIZK $\hat{\pi}$ proving $C_2$ is contained in $D_2$ and $D_{2,1}$. $P_2$ homomorphically computes all $\neg(\alpha_u\!\oplus\!\beta_u)$ and $\neg\beta_\ell$. $P_2$ embeds $C_1$ and $C_2$ into AND-homomorphic GM by manipulating ciphertexts $D_1,D_{2,1}$. Specifically, $P_2$ selects random coins and perform GM-RSA encryption on them, which then multiplied by $D_1,D_{2,1}$. $P_2$ also attaches knowledge proofs $\pi_{2,1}$ of the random coins following the construction of $\pi$ below. For each $\ell\! =\!1,\dots,\eta$, $P_2$ then computes a ciphertext $\bar{c}_\ell$ of $\bar{b}\!=\!\alpha_\ell\!\wedge\!\neg\beta_\ell\wedge\bigwedge_{u=\ell+1}^\eta(\alpha_u\!=\!\beta_u)$.
                  $P_2$ randomly shuffles sequence $(\bar{c}_1,\dots,\bar{c}_\eta)$ to $\overline{res}_{2,1}\!=\!shuffle(\bar{c}_1,\dots,\bar{c}_\eta)$ by a random permutation $shuffle$ with a NIZK proof $P_{2,1}^{shuffle}$ of $shuffle$ following \cite{OKST97,RW04}. Let proof $P_{2,1}^{eval}\!=\!(D_{2,1},\hat{\pi},\pi_{2,1},P_{2,1}^{shuffle})$. $P_2$ encrypts $P_{2,1}^{eval}$ into $[P_{2,1}^{eval}]$ by the public key of $\mathbf{A}$ such that the proof is visible only to $\mathbf{A}$.
                  $P_2$ puts $\langle \overline{res}_{2,1},[P_{2,1}^{eval}] \rangle$ on blockchain. $P_{\!1}$ performs operations similarly to $P_2$ and publishes $\langle \overline{res}_{1,2},[P_{1,2}^{eval}] \rangle$.
              \item $\mathbf{A}$ verifies $P_{2,1}^{eval},P_{1,2}^{eval}$ by checking the NIZK proofs. Then,            $\mathbf{A}$ published the verification results on blockchain.
              \item $P_1$ decrypts $\overline{res}_{2,1}$ by $d_1$ into $res_{2,1}$ and puts $\langle res_{2,1}\rangle$ on the blockchain. Similarly, $P_2$ publishes $\langle res_{1,2}\rangle$.
              \item $\mathbf{A}$ can verify correctness of $res_{2,1}$ by re-encrypting (via $e_1$) before decrypting. If exactly one $1$ is contained, then $\alpha\!>\!\beta$. Otherwise, $\mathbf{A}$ decrypts $res_{1,2}$. If exactly one $1$ is obtained, then $\alpha\!<\!\beta$. Otherwise $\alpha\!=\!\beta$.

            \end{enumerate}
        \emph{NIZK-$\pi$}. For a GM public key $(n,z)$, a RSA public key $(e,N)$, a GM-RSA ciphertext $C\!=\!c^e\!=\!(r^2z^b\!\bmod\!n)^e\!\bmod\! N$ of a bit $b$, a prover can prove the knowledge of $b$ in $\kappa$ rounds as in \cite{Katz03}. The prover randomly chooses $\kappa$ numbers $\{r_i\}_{i=\!1}^\kappa$, and computes $\{A_i\!=\!r_i^{4e}\}_{i=\!1}^\kappa$. With the help of a publicly known random oracle $\mathcal H\!:\!\{0,1\}^\ast\!\to\!\{0,1\}^\kappa$, the prover computes challenges $\gamma\!=\!\mathcal H(A_1,\dots,A_\kappa)$ regarded as a bit string. The prover concludes the proof by sending responses $(\gamma,\{R_i\!=\!(r^{\gamma_i}\!\cdot\!r_i)^e\}_{i=\!1}^\kappa)$. A verifier computes all $A_i'\!=\!\frac{R_i^4}{C^{2\gamma_i}}$ and accepts if $\gamma\!=\!\mathcal H(A_1',\dots,A_\kappa')$. The security relies on the security of RSA and the original proof of knowledge $b$ for GM ciphertext in \cite{Katz03}.

        \noindent\emph{NIZK-$\hat{\pi}$}. For two RSA public keys $e_1,e_2$, public ciphertexts $X_1,X_2$, a prover can prove the knowledge of $x$ such that $X_1\!=\!x^{e_1}\wedge X_2\!=\!x^{e_2}$. The prover randomly chooses $y$ and computes $Y_1\!=\!y^{e_1},Y_2\!=\!y^{e_2}$. Then, the prover computes challenges $\gamma\!=\!\mathcal H(Y_1,Y_2)$ regarded as an integer. The response is $(\gamma,yx^\gamma)$.  A verifier computes $Y_1'\!=\!\frac{(yx^\gamma)^{e_1}}{X_1^{\gamma}}$, $Y_2'\!=\!\frac{(yx^\gamma)^{e_2}}{X_2^{\gamma}}$, and accepts if $\gamma\!=\!\mathcal H(Y_1',Y_2')$. For security of $\hat{\pi}$, one can refer to \cite{GQ88-RSA-zk} for proving the knowledge of $x$ such that $X\!=\!x^e$.

            \begin{theorem}\label{thm2}
                 The above TPIC protocol is secure against mixed adversary $\mathcal{A}\!=\!(\!\mathcal{A}_1,\mathcal{A}_2\!)$, where $\mathcal{A}_1$ is semi-honest and can corrupt the judge $\mathbf{A}$, $\mathcal{A}_2$ is malicious and can corrupt $P_{\!1}$ or $P_{\!2}$. I.e., the privacy, $\alpha$ and $\beta$, of $P_{\!1}$ and $P_{\!2}$ respectively, is preserved.
            \end{theorem}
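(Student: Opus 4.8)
The plan is to argue security in the simulation paradigm of Definition~\ref{def1}, building a separate PPT simulator for each corruptible role and showing that its output is computationally indistinguishable from the corresponding real view. Since $\mathcal{A}_1$ and $\mathcal{A}_2$ are non-colluding, each simulator is analysed against a single corrupted party with all other parties honest, so three cases arise: $\mathbf{Sim}_A$ for a semi-honest judge $\mathbf{A}$ controlled by $\mathcal{A}_1$, and $\mathbf{Sim}_1,\mathbf{Sim}_2$ for a malicious $P_1$ or $P_2$ controlled by $\mathcal{A}_2$ (these two cases are symmetric, so I would spell out only $\mathbf{Sim}_2$). All NIZKs live in the random oracle model, so every simulator may observe and program $\mathcal H$, and the zero-knowledge and soundness/extraction properties of $\pi$, $\hat{\pi}$, the coin proofs $\pi_{i,j}$ and the shuffle proofs $P_{i,j}^{shuffle}$ are the main tools.

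For $\mathbf{Sim}_A$ (corrupted judge), the simulator is given the public keys, the GM key $\mathcal{SK}_3$, the result $\mathcal F(\alpha,\beta)$, $Out_A$ and $\phi_3$, but never $\alpha$ or $\beta$. On behalf of $P_1$ and $P_2$ it publishes GM-RSA encryptions of all-zero bit strings in place of $D_1,D_2,D_{1,2},D_{2,1}$, of the AND-embeddings of $C_1,C_2$, and of the per-position ciphertexts $\bar c_\ell$; semantic security of GM together with that of RSA makes these indistinguishable from the real ciphertexts via a hybrid that swaps them one at a time. The proofs $\pi_i,\hat{\pi},\pi_{i,j},P_{i,j}^{shuffle}$ are produced by their zero-knowledge simulators (programming $\mathcal H$ on the freshly committed first messages), and each $[P_{i,j}^{eval}]$ is merely an encryption, under $\mathbf{A}$'s own public key, of these simulated transcripts. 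The only values $\mathbf{Sim}_A$ must make consistent with $\mathcal F(\alpha,\beta)$ are the final decryptions: when $\mathcal F(\alpha,\beta)=1$ it arranges $\overline{res}_{2,1}$/$res_{2,1}$ to decrypt to a length-$\eta$ sequence with exactly one $1$ in a uniformly random position (matching a random shuffle) and $res_{1,2}$ to all $0$; the case $\mathcal F(\alpha,\beta)=0$ is obtained by swapping the two sequences, and $\alpha=\beta$ by making both all-$0$. Chaining the ciphertext hybrids with the ZK-simulation steps yields $\mathbf{Sim}_A(\mathcal F(\alpha,\beta),\mathcal{SK}_3,Out_A,\phi_3)\approx_c\mathbf{VIEW}^{\Pi}_{A}$.

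For a malicious competitor, say $P_2$ is corrupted, $\mathbf{Sim}_2$ runs $\mathcal{A}_2$ internally and plays the honest $P_1$ and the honest judge $\mathbf{A}$; it may therefore generate $P_1$'s RSA key $(e_1,d_1)$ and $\mathbf{A}$'s GM key itself, so the only thing it lacks is $P_1$'s input $\alpha$. Consequently $\mathbf{Sim}_2$ carries out faithfully every action of $P_1$ and $\mathbf{A}$ that does not depend on $\alpha$ --- in particular it decrypts the adversarial $\overline{res}_{2,1}$ with $d_1$, decrypts $[P_{2,1}^{eval}]$ and verifies $\mathcal{A}_2$'s proofs as $\mathbf{A}$ would, publishing the same accept/reject bit --- and replaces only $P_1$'s $\alpha$-dependent output $\langle e_1,D_1,\pi_1\rangle$ and $\langle\overline{res}_{1,2},[P_{1,2}^{eval}]\rangle$ by GM-RSA encryptions of all-zero strings together with zero-knowledge-simulated proofs. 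Semantic security of GM and RSA hides the substitution in $D_1,\overline{res}_{1,2}$, the ZK property hides it in the proofs, and since $res_{2,1},res_{1,2}$ remain under $\mathbf{A}$'s GM layer nothing about the comparison result leaks to $P_2$. To make the simulated honest-party outputs consistent with $\mathcal F(\alpha,\beta)$, $\mathbf{Sim}_2$ also uses the proof of knowledge inside $\pi_2$ (and, via their soundness, the equality proof $\hat{\pi}$, the coin proof $\pi_{2,1}$ and the shuffle proof $P_{2,1}^{shuffle}$) to extract the \emph{effective} input that $\mathcal{A}_2$'s messages commit to, so that a deviating $P_2$ is detected exactly when the real protocol would reject and an accepting $P_2$ is bound to a well-defined input. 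Indistinguishability then follows by a hybrid argument, each step bounded by semantic security of GM or RSA, by zero knowledge of one NIZK, or by the $1-2^{-\iota}$ and $1-2^{-\kappa}$ decryption/soundness errors.

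The main obstacle is the malicious branch: making the soundness-and-extraction argument airtight along the whole chain of ciphertexts that $\mathcal{A}_2$ manipulates in step~2 --- the public $D_2$, its $e_1$-re-encryption $D_{2,1}$, the AND-embeddings of $C_1$ and $C_2$ formed by multiplying into $D_1,D_{2,1}$, the per-position ciphertexts $\bar c_\ell$, and the shuffled $\overline{res}_{2,1}$ --- so that the bit string extracted from $\pi_2$ provably equals the input actually carried downstream (forced by $\hat{\pi}$ on $D_{2,1}$, by $\pi_{2,1}$ on the blinding coins, and by $P_{2,1}^{shuffle}$ on the permutation), and so that any deviation is caught by exactly one of these proofs and causes identical abort behaviour in both worlds. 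Subsidiary care is needed that the rewinding extractor stays polynomial time when several NIZKs are extracted/simulated together, and that $\mathbf{Sim}_A$'s programming of $\mathcal H$ does not collide with queries the environment has already issued; both are handled by standard random-oracle bookkeeping.
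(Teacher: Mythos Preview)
Your proposal is correct and follows the same simulation-based strategy as the paper: build a separate simulator for each corruptible role, replace honest-party ciphertexts by encryptions of dummies (justified by semantic security of GM and RSA), fake all NIZKs with their zero-knowledge simulators in the random-oracle model, and program the final $res$ sequences to match $\mathcal F(\alpha,\beta)$.

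The one place where your write-up diverges from the paper's sketch is the treatment of the malicious competitor. The paper's $\mathbf{Sim}_1$ is simply \emph{given} $(e_1,d_1)$ and $\alpha$ and proceeds ``with the help of TTP'', never spelling out how the corrupted party's effective input is defined or why the ideal functionality receives the same value that the ciphertext chain actually encodes. You instead make this explicit: the simulator runs $\mathcal A_2$ internally, \emph{extracts} the committed bits from the proof-of-knowledge $\pi_2$, and then relies on the soundness of $\hat{\pi}$, $\pi_{2,1}$, and $P_{2,1}^{shuffle}$ to argue that the downstream ciphertexts carry that same extracted input (or else the real protocol aborts identically). This is the standard and more rigorous route for malicious security, and it plugs a gap the paper's sketch leaves open; on the other hand, the paper's version is shorter precisely because it elides this bookkeeping. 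Everything else --- the hybrid over ciphertexts, the use of the NIZK simulators, and the case split on $\mathcal F(\alpha,\beta)$ for the judge's view --- matches the paper's argument.
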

        \begin{proof}[Sketch Proof]
          Following Definition~\ref{def1}, we construct simulators for the three PPT parties with the help of the ideal functionality $\mathcal{F}$, which computes the relationship between $\alpha$ and $\beta$, $i.e., \mathcal F(\alpha,\beta)=1,$ if $\alpha>\beta$, otherwise $\mathcal F(\alpha,\beta)=0$. We denote by $\mathbf{Sim}^{\pi}$ and $\mathbf{Sim}^{\hat{\pi}}$ the simulators of NIZK $\pi$ and $\hat{\pi}$.

          Given $(e_1,d_1)$ and $\alpha$, $\mathbf{Sim}_1$ computes $\langle e_1,D_1,\pi_1\rangle$ as in the real. Then $\mathbf{Sim}_1$ invokes $\mathbf{Sim}^{\pi}$ to simulate behavior of $P_2$. This simulation is computationally indistinguishable from the real.
          In step 2, $\mathbf{Sim}_1$ computes ciphertexts with the help of TTP and $\mathbf{Sim}^{\hat{\pi}}$. Since the final ciphertexts $\overline{res}_{1,2}$ will reveal the relationship between $\alpha$ and $\beta$, $\mathbf{Sim}_1$ can just encrypts random numbers according to $F(\alpha,\beta)$. For $D_{1,2}$, if $F(\alpha,\beta)\!=\!1$, $\mathbf{Sim}_1$ encrypts only one $1$ and 0 for other $\eta\!-\!1$ elements; otherwise, it always encrypts 0 for all $\eta$ elements. Afterwards, it invokes $\mathbf{Sim}^{\hat{\pi}}$ to guarantee NIZK is acceptable.
          $\mathbf{Sim}_1$ encrypts random bits and gives NIZK following $\pi$ to embed $D_2,D_{1,2}$ to AND-homomorphic GM.
          Then it invokes the simulator for $shuffle$ to shuffle the sequence. This simulation process is also computationally indistinguishable from the real computation.
          In step 4, $\mathbf{Sim}_1$ does decryptions as in the real with the given private key $d_1$, which is identical to the real. One can construct $\mathbf{Sim}_2$ as same as the construction of $\mathbf{Sim}_1$.

          The judge $\mathbf{A}$, who has no private input, only obtains RSA ciphertexts in step 1,2,3 and NIZK proofs. $\mathbf{A}$ can revoke $\mathbf{Sim}^{\pi}$ and $\mathbf{Sim}^{\hat{\pi}}$. With the help of TTP, the simulation should be consistent with  $F(\alpha,\beta)$ as discussed for $\mathbf{Sim}_1$ in step 2, such that the resulting simulator is computationally indistinguishable from the real execution.
        \end{proof}

   \section{Applications}\label{sect-Appli}
        In this section, the applications based on three-party comparisons are given with the help of blockchain.

 \subsection{Joint bidding}
        The proposed comparison protocols can be used for joint bidding \cite{CL02}, which is a multi-party scenario.
        Joint bidding is a form of alliance and is used extensively in the construction and insurance industries, allowing a group of firms to collectively undertake a big project beyond any single. The logic is to take advantage of the collective size of the group. To choose proper partners, the project owner needs to check their financial assets which they do not like to uncover.

        A solution is given and instantiated by the three-party comparison for malicious competitors.
        In our design, all the firms $\mathbf{B_i}$ run parallelized TPIC with the project owner $\mathbf{A}$. At last, $\mathbf{A}$ learns the order of all assets to select needed partners. If the owner wants to choose partners possessing assets more that a prefer value $x$, it can play a role of firm to compare all assets with $x$. $\mathbf{A}$ then chooses firms whose assets is larger than $x$ whilst $x$ is not known to them.

        At the beginning, $\mathbf{A}$ broadcasts the GM public key, all firms perform the following scheme.
            \begin{enumerate}
              \item Each firm $\mathbf{B_i}$ publishes $\langle e_i,D_i,\pi_i\rangle$ on blockchain committing to assets $y_i$. If a prefer value $x$ is needed, $\mathbf{A}$ publishes $\langle e,D,\pi\rangle$ on chain committing to $x$.
              \item Each $\mathbf{B_i}$ computes $\langle \overline{res}_{i,j},[P_{i,j}^{eval}] \rangle$ for all $j\!\ne\!i$ and puts them on blockchain.
              \item $\mathbf{A}$ verifies all NIZK proofs and publishes the verification results.
              \item $\mathbf{B_i}$ decrypts all $\overline{res}_{j,i}$ into $res_{j,i}$ and publishes them on blockchain.
              \item $\mathbf{A}$ decrypts $res_{i,j}$ getting the order of all $y_i$ (and $x$). Then $\mathbf{A}$ can select proper partners according to the ranking of assets (larger than $x$).
            \end{enumerate}
         Here, the commitment with deposits can also be used to incentivize honest performance.

         The security is guaranteed by theorem~\ref{thm2}. One can refer to \cite{BK2} for the time cost for GM encryption and related NIZK proofs. For primes $|p|\!=\!|q|\!=\!768$, 32-bit integers, $\kappa\!=\!\iota\!=\!40$ for soundness error $2^{-40}$, the total time is less than 1s. To guarantee RSA is compatible with GM, we test encryption and decryption time setting $|p|\!=\!|q|\!=\!1152$ on PC (AMD PRO A10-8770 R7, 10 COMPUTE CORES 4C+6G) running at 3.5GHz. The total RSA encryption time for $1280(=\eta\iota)$ times is less than 200ms while 6s for decryption. The time cost is acceptable in the real.

         The scheme runs in 4 blocks and the per-firm communication cost in each step is given in table~\ref{table_comm-cost}, where $\eta$ is bit length for the asset value, $\iota$ is the soundness parameter in AND homomorphic of GM,  $\kappa$ is the soundness parameter in NIZK proofs, $|p|$ is the prime size for RSA, $N$ is the number of firms. The communication coat in step 2 does not contain the ciphertexts $[P_{i,j}^{eval}]$. $P_{i,j}^{eval}$ can be encrypted by an alternative scheme instead of GM since GM is a bit-wise encryption, thus consumes much communication.

        \doublerulesep 0.1pt
      \begin{table}[t]
      \begin{footnotesize}
      \caption{Per-firm communication cost.}
      \label{table_comm-cost}
         \begin{center}
                \begin{tabular}{ccccc}
                    \toprule
                    \ --   \quad & step 1            \quad &step 2     \quad &step 4 \\
                    \midrule
                    comm(bits)\quad &$2(\eta\!+\!\kappa)|p|$ \quad&$2(N\!-\!1)\eta\iota|p|$ \quad&$2(N\!-\!1)\eta\iota|p|$ \\
                \bottomrule
                \end{tabular}
         \end{center}
      \end{footnotesize}
      \end{table}

 \subsection{Auctions from TPIC}\label{multi-party-case}
        In this subsection, a general version of three-party comparison is presented by a privacy-preserving auction scheme running in a constant number of rounds. In an auction scenario, $N$ bidders submit bids hoping to win without revealing the bid information while the auctioneer expects to determine a winner with the highest bid. Traditional seal-bid scheme heavily relies on the auctioneer $\mathbf{A}$ who is assumed to be trusted since $\mathbf{A}$ learns all bid information. This dependence will however lead to unfairness as $\mathbf{A}$ may collude with a bidder or leak the bids of the honest to a malicious bidder. On the other hand, a bidder may perform badly to influence the outcome of the auction.
        In our instantiation, the commitment scheme (CS) with deposits is used to bind bid as well as the ciphertexts with each bidder. The deposits are used to punish bidders with bad behavior like inconsistent commitment or wrong complaint. The three-party protocol for semi-honest competitors is used to compare all bids under ciphertexts to preserve privacy.

        Roughly, each bidder $\mathbf{B_i}$ is required to run CS to broadcast $CS.Commit_i$ on blockchain with some deposits committing to his bid $y_i$ and its ciphertexts. He is also supposed to provide a transaction $CS.Fuse_i$ for the auctioneer $\mathbf{A}$ with his signature on it. Then, $\mathbf{B_i}$ runs TPIC to compare his bid with that of $\mathbf{B_j}$ for all $j\!\ne\!i$ in parallel.
        At the last round of TPIC, $\mathbf{A}$ gets all the comparison results and announces the highest bid holder to be the auction winner.
        The determined winner $\mathbf{B_w}$ should open his commitment on blockchain and other bidders can verify its soundness.
        A bidder $\mathbf{B_i}$ who loses the auction can redeem the deposits by $CS.Open_i$ which will disclose the bid information, thus an alternative $CS.Refund_i$ is used to return the deposits. $CS.Refund_i$ contains no bid information and is valid only signed by both $\mathbf{B_i}$ and $\mathbf{A}$.
         If a complainer $\mathbf{B}$ submits a complaint towards $\mathbf{B'}$, $\mathbf{B}$ is required to open his commitment to $\mathbf{A}$ for consistency verification.
         Then they run TPIC again, after which $\mathbf{A}$ gets the result and penalizes the loser by broadcasting the transaction $CS.Fuse$ on blockchain to confiscate his deposits before exclusion. The punishment is necessary to prevent bid information extraction from malicious comparisons by continuous complaint submission, because each comparison will reveal a lower or upper bound of the bid.
        Finally, $\mathbf{A}$ signs the transaction $CS.Refund$ from the honest bidders and broadcasts them on blockchain to return their deposits.

        Specifically, all bidders and the auctioneer negotiate a hash function $\mathcal{H}$ and the deposits value used in the commitment. Each $\mathbf{B_i}$ runs both $\mathbf{Gen}_1(1^\lambda)$ and $\mathbf{Gen}_2(1^\lambda)$ to generate the key pairs $(\mathcal{PK}^1_i,\mathcal{SK}^1_i)$ and $(\mathcal{PK}^2_i,\mathcal{SK}^2_i)$ while $\mathbf{A}$ runs $\textbf{Gen}_{\oplus}(1^\lambda)$ to get the key pair $(\mathcal{PK}_3,\mathcal{SK}_3)$. They all publish their public keys, and
        then run the following parallelized TPIC, where $N$ is the number of bidders, $i,j\!\in\!\{1,2,\dots,N\},i\!\ne\!j$, and $m_{i,j}$ denotes that the message $m$ in TPIC is computed by $\mathbf{B_i}$ and transferred to the target receiver $\mathbf{B_j}$.
            \begin{enumerate}
              \item Each $\mathbf{B_i}$ encrypts $y_i$ into $C_i$ via $PKE_1$, and computes  $H_{y_i}^{C_i}\!=\!\mathcal{H}(y_i\|C_i\|S_i)$ with a random string $S\!_i$ as his commitment.
                  $\mathbf{B_i}$ creates two transactions $(CS.Fuse_i,CS.Refund_i)$ and encrypts them by $\mathcal{PK}_{\!3}$ of $\mathbf{A}$ into $[CS.Fuse_i,CS.Refund_i]$.
                  Finally, $P_{\!1}$ broadcasts
                  $$\langle H_{y_i}^{C_i},C_i,[CS.Fuse_1,CS.Refund_1]\rangle$$
                  on blockchain by a transaction $CS.Commit_i$ with some deposits;
              \item $\mathbf{B_i}$ computes $(D,A_2)_{i,j}$ for all $j\ne i$ and puts them on blockchain;
              \item $\mathbf{B_i}$ computes $(A_1)_{i,j}$ for all $j\ne i$ and publishes them on blockchain;
              \item $\mathbf{B_i}$ computes $(A_0)_{i,j}$ for all $j\ne i$ and posts them on blockchain;
              \item For $\ell=k-1,\dots,0$, $\mathbf{A}$ computes $(m_\ell)_{i,j}=\textbf{Dec}_{\oplus}((A_{0,\ell})_{i,j})$.
                  If $\exists i$ for $\forall j\ne i,\forall\ell$, $(m_\ell)_{i,j}\ne0$,
                        then $\mathbf{A}$ determines $\mathbf{B_i}$ to be the winner who is required to open his commitment for public verification.
              \item If a complainer $\mathbf{B}$ submits a complaint towards $\mathbf{B'}$ on blockchain, then
                        he is required to reveal $y\|C\|S$ to $\mathbf{A}$ for consistency verification. If it is not consistent, $\mathbf{A}$ penalizes $\mathbf{B}$ by broadcasting his $CS.Fuse$ on blockchain confiscating his deposits. Otherwise, they run TPIC with the complained bidder, after which, $\mathbf{A}$ gets the result and penalizes the loser who is excluded from now on.
                    Finally, $\mathbf{A}$ signs the transaction $CS.Refund$ from the honest bidders and broadcasts them on blockchain to return their deposits.
            \end{enumerate}
        In this scheme, we assume that there exists only one bidder holding the highest bid. If more than one bidders submit the same highest bid, then they can run TPIC to bid again.

                \begin{theorem}\label{thm3}
                        Our auction scheme is secure against mixed adversary $(\mathcal{A}_1,\mathcal{A}_2)$, where $\mathcal{A}_1$ may corrupt the auctioneer $\mathbf{A}$, $\mathcal{A}_2$ may corrupt one or more bidders. I.e., the bid privacy of honest bidders is preserved.
                \end{theorem}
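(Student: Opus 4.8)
The plan is to reduce security of the auction scheme to Theorem~\ref{thm1} (security of the semi-honest TPIC), the security of the commitment scheme with deposits, and the semantic security of $PKE_\oplus$ used to wrap the $CS.Fuse$/$CS.Refund$ transactions, by constructing simulators for the two adversaries $\mathcal{A}_1$ and $\mathcal{A}_2$ as required by Definition~\ref{def1}. Since the auction is essentially $\binom{N}{2}$ parallel copies of the semi-honest TPIC glued together with blockchain commitments, I would first set up the hybrid bookkeeping: for a corrupted set $I$ of bidders (plus possibly $\mathbf{A}$), the simulator receives $In_I$, the ideal-functionality output $Out_I$ (the set of comparison results $cmp_{i,j}$ delivered to the judge, together with the identity of the winner), and $\phi_I$. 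I would build $\mathbf{Sim}_I$ by running, for every pair $(i,j)$ with at least one honest endpoint, the TPIC simulator from Theorem~\ref{thm1} on the corresponding projected output, and running the honest TPIC code on pairs with both endpoints corrupted.

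The key steps, in order, are: (1) \emph{Commitment layer.} For each honest bidder $\mathbf{B_i}$, replace the real commitment $H_{y_i}^{C_i}=\mathcal H(y_i\|C_i\|S_i)$ by a commitment to a dummy value; indistinguishability follows from the hiding property of CS (unpredictability of $\mathcal H$, as stated in Section~\ref{commitment}). (2) \emph{Ciphertext layer.} Replace $C_i$ (the $PKE_1$ encryption of the honest bid) and the wrapped $[CS.Fuse_i,CS.Refund_i]$ (under $\mathbf{A}$'s key $\mathcal{PK}_3$) by encryptions of zeros; this is where semantic security of $PKE_1$ and $PKE_\oplus$ is invoked, and it is valid precisely because $\mathcal{A}_1$ (who holds $\mathcal{SK}_3$) and $\mathcal{A}_2$ are non-colluding, so whichever adversary is present lacks one of the two secret keys. (3) \emph{Comparison layer.} For each relevant pair invoke $\mathbf{Sim}_1$, $\mathbf{Sim}_2$, or $\mathbf{Sim}_A$ from the proof of Theorem~\ref{thm1}, feeding it the pairwise result $cmp_{i,j}$ from $Out_I$ so that $\mathbf{Sim}_A$ in particular produces an $A_0$-vector consistent with the announced winner. (4) \emph{Complaint branch.} Handle step~6: if a corrupted $\mathbf{B}$ complains, the simulator learns $y\|C\|S$ (the complainer must reveal it), checks consistency exactly as $\mathbf{A}$ would, invokes the $CS.Fuse$ penalty path using $\phi_I$ when inconsistent, and otherwise re-runs the TPIC simulator once more on the re-comparison result returned by $\mathcal{F}$; a standard hybrid over the (polynomially bounded) number of complaints closes this case. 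Composing these hybrids and a union bound over the $O(N^2)$ parallel comparisons gives computational indistinguishability.

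The main obstacle I anticipate is the \emph{parallel composition together with the shared judge key and the complaint-driven re-comparisons}: the $\binom{N}{2}$ TPIC instances are not independent — they reuse the same bids $y_i$, the same per-bidder keys, and a single judge key $\mathcal{PK}_3$ — so I cannot appeal to a generic parallel-composition theorem and must instead argue the hybrid substitution for one honest bidder at a time, being careful that the view of $\mathcal{A}_2$ across all pairs touching that bidder is simulatable from just the pairwise outputs, and that each complaint leaks only the single bound that $\mathcal{F}$ already discloses (the deposit/penalty mechanism is what bounds the number of such leaks). A secondary subtlety is malicious-style behavior tolerated for the semi-honest $\mathcal{A}_2$ in this scheme — false complaints, quitting early, inconsistent commitments — which must be shown to be detectable and punishable within the simulation using the $CS.Fuse$ transaction and consistency check, so that the simulator's abort/penalty decisions match the real protocol's. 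Modulo these, the argument is a routine, if lengthy, hybrid layered on top of Theorem~\ref{thm1}.
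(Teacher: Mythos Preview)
Your proposal is correct and follows essentially the same approach as the paper's sketch proof: both build per-party simulators that (i) replace honest commitments by dummy hashes via the hiding of CS, (ii) replace honest-bidder ciphertexts $C_i$ by encryptions of zeros via the semantic security of $PKE_1$ (and handle $A_0$/$A_1$ consistently with the ideal output), and (iii) defer the complaint branch to the TPIC argument of Theorem~\ref{thm1}; for a corrupted set $I$ the simulator simply aggregates the per-bidder simulators. Your write-up is in fact more careful than the paper's, which does not spell out the hybrid layering, the $O(N^2)$ union bound, or the shared-state issue you flag as the main obstacle; the paper just constructs each $\mathbf{Sim}_i$ step-by-step and then sets $\mathbf{Sim}_I$ to run all $\mathbf{Sim}_j$, $j\in I$, in parallel.
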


         \begin{proof}[Sketch Proof]
                  By the ideal functionality $\mathcal{F}$, we construct a simulator for each party. For each bidder $\mathbf{B_i}$, we construct $\mathbf{Sim}_i$ given the bid $y_i$, related public and private keys, and the output $\mathcal{F}_i$ of $\mathbf{B_i}$ receiving from TTP.
                  Since the commitments is based on hash function, each simulator can hash a random string as simulated commitment for others which is computationally indistinguishable from the real one.
                  In step 1, given $y_i$, $\mathbf{Sim}_i$ directly computes $C_i$, $CS.Commit_i$, $CS.Fuse_i$ and $CS.Refund_i$. These are identical to the real.
                  In step 2, $\mathbf{B_i}$ can read all $C_j,j\ne i$ on blockchain,
                  $\mathbf{Sim}_i$ computes $C_j$ by encrypting zeros which is computationally indistinguishable since $C_j$ is ciphertext encrypted by semantically secure $PKE_1$.
                  $\mathbf{Sim}_i$ can compute $(D,A_2)_{j,i}$ for all $j\ne i$ with $y_i$.
                  This is the same as the real.
                  In step 3, $\mathbf{B_i}$ can read all $(D,A_2)_{i,j}$ for $j\ne i$ on blockchain. Since $D$ is encrypted by $PKE_1$, $\mathbf{Sim}_i$ computes $D$ by encrypting zeros which is indistinguishable.
                  In the real, $\mathbf{B_i}$ can decrypts $A_2$ by the private key of $PKE_2$ obtaining the messages encrypted by $PKE_3$ which is again semantically secure. $\mathbf{Sim}_i$ computes $A_2$ by encrypting random numbers selected from
                  $\mathcal{M}_{\oplus}$ via $PKE_2$ and $PKE_3$, which is indistinguishable.
                  $\mathbf{Sim}_i$ decrypts all $D$ getting $w$, and computes $(A_1)_{j,i}$ by the blind and shuffle operation.
                  Just as $A_2$, $\mathbf{Sim}_i$ can compute $A_1$ that is computationally indistinguishable. Note that $A_1$, as well as $A_0$, will indicate whether $\mathbf{B_i}$ is the winner or not, it should be simulated according to $\mathcal{F}_i$. If $\mathbf{B_i}$ is a winner, $\mathbf{Sim}_i$ computes $(A_{1,\ell})_{i,j}$ by encrypting uniformly and randomly chosen non-zero numbers for every $\ell$ and $j,j\ne i$, then $\mathbf{Sim}_i$ obtains $(A_{1})_{i,j}=((A_{1,k-1})_{i,j},\dots,(A_{1,0})_{i,j})$. Otherwise, for each $j,j\ne i$, $\mathbf{Sim}_i$ randomly chooses $\zeta\in\{k-1,\dots,0\}$, gets $(A_{1,\zeta})_{i,j}$ by encrypting 0, and computes $(A_{1,\ell})_{i,j},\ell\ne\zeta$ by encrypting uniformly and randomly chosen non-zero numbers.
                  In step 4, $\mathbf{Sim}_i$ decrypts $A_1$ and gets $A_0$ after blinding and shuffling.

                  As for $\mathbf{A}$, we construct $\mathbf{Sim}_A$ given related public and private keys and the output $\mathcal{F}_A$ of $\mathbf{A}$ receiving from TTP. As discussed above, in the view of $\mathbf{A}$, all messages communicated are ciphertexts encrypted by (semantically) secure PKE, thus they are simulatable by encrypting mud numbers, which are computationally indistinguishable. Since $\mathbf{A}$ will learn the comparison results from $(A_0)_{i,j}$, it should be simulated according to $cmp_{i,j}$. If $y_i\ge y_j$, then for all $\ell$, $(A_{0,\ell})_{i,j}$ should not be decrypted to 0. In this case, $\mathbf{Sim}_A$ encrypts $\ell$ non-zero random numbers by $PKE_3$ then shuffles them. Otherwise, there exists $\zeta$ such that $\textbf{Dec}_{\oplus}((A_{0,\zeta})_{i,j})=0$. $\mathbf{Sim}_A$ encrypts 0 and $(\ell\!-\!1)$ non-zero random numbers then shuffles them. For the winner, $\mathbf{Sim}_A$ can perform similarly to the case (step 3) where above $\mathbf{Sim}_i$ simulates that $\mathbf{B_i}$ is a winner.

                  For a complaint from $\mathbf{B}$ about $\mathbf{B'}$, in the real, $\mathbf{A}$ runs TPIC with $\mathbf{B}$ and $\mathbf{B'}$. The proof of this case is the same as Theorem~1.
                  So we have
                  $$\mathbf{Sim}_A(\mathcal{F}(In_\mathcal P),Out_A,\phi_A)\approx_c \mathbf{VIEW}^{\Pi}_{A}(In_\mathcal P,\mathcal{F}(In_\mathcal P),Out_A,\phi_A),$$
                  where $\mathcal{P}=\{\mathbf{B_i}:i=1,\dots,N\}$ and $\Pi$ denotes our scheme.
                  Hence, our scheme is secure against adversary $\mathcal{A}_1$ who may control the auctioneer.

                  For $\mathcal{A}_2$ who may control more than one bidders denoted by set $I$, we construct $\mathbf{Sim}_I$ as follows. Since the bidders are insulated in the real, they computes ciphertexts independently. $\mathbf{Sim}_I$ does what $\mathbf{Sim}_j$ does for all $j\in I$, such that
                  $$\mathbf{Sim}_I(In_I,\mathcal{F}(In_\mathcal P),\phi_I)\approx_c \mathbf{VIEW}^{\Pi}_{I}(In_\mathcal P,\mathcal{F}(In_\mathcal P),\phi_I).$$
                  Thus the scheme is also secure against adversary $\mathcal{A}_2$.
                 \end{proof}

         The complexity analysis and comparisons with related work are given in a followed section.

\section{Complexity Analysis of the Auction}\label{sect-complexity}
            In this section, we analyse the performance of our auction scheme. It is assumed that the number of bidders is $N$ and the bids are $\eta$ bits, i.e., $k=\lceil\log_d(2^\eta)\rceil,d>2$.

             Our auction scheme runs in 4 rounds (blocks) which is constant in both the length of the bid and the number of bidders. In the first round, each bidder just encrypts his bid to $k$ ciphertexts. In other rounds, a bidder computes $k(N\!-\!1)$ different ciphertexts comparing his bid with other $N\!-\!1$ bids in parallel, then he publishes them on blockchain.
             While $PKE_3$ is more efficient over a fast elliptic curve, $PKE_1$ and $PKE_2$ are much time consuming performed over finite field.
             Referring to the experimental data in \cite{CEK}, we roughly estimate the time cost as shown in Table~\ref{table_time}. The time is computed according to the time to compute one ciphertext since each bidder computes $k(N\!-\!1)$ different ciphertexts. To meet the cryptographic parameters in \cite{CEK},
             we set $b=2,d=8$, $N=100$ for an auction with 100 bidders, and the bid with bit-length 30($|y_i|=30$, a ``billion" magnitude), thus $k=10$. For 128-bit security level, set $|n|=3072,|u|=256$, and $|n|=15360,|u|=512$ for 256-bit security. For $PKE_3$, the exponential variant of ElGamal \cite{ST} is implemented over an elliptic curve. We refer the reader to the original paper \cite{CEK} for more details about specific parameterizations.

            \begin{table}[!t]
            \begin{footnotesize}
      \caption{Performance analysis of our auction scheme.}
      \label{table_time}
         \begin{center}
                \begin{tabular}{ccccccc}
                    \toprule
                    \     \quad &\       \quad &\ 128-bit&security     \quad  &\ 256-bit&security \\
                    Round \quad &Com\quad&Time(s)  \quad &Size(MB)\quad &Time(s)\quad &Size(MB)\\
                    \midrule
                    \#1   \quad &$0$       \quad&0.39     \quad &0.37   \quad &8.52   \quad &1.83\\
                    \#2   \quad &$k(N\!-\!1)$    \quad&1.14     \quad &0.73   \quad &15.66   \quad &3.66 \\
                    \#3   \quad &$k(N\!-\!1)$    \quad&0.81     \quad &0.37   \quad &9.21    \quad &1.83 \\
                    \#4   \quad &$k(N\!-\!1)$    \quad&0.14     \quad &0.12   \quad &2.13    \quad &0.25 \\
                \bottomrule
                \end{tabular}
         \end{center}
      \end{footnotesize}
      \end{table}

             The performance analysis of our auction scheme is given in Table~\ref{table_time}. It shows the computation times of $PKE_2$(\emph{Com}), encryption and decryption time of $PKE_1$ and $PKE_3$(\emph{Time}), the size of transferred messages(\emph{Size}). In the table, we just list the computation times of $PKE_2$, to make it easier to be compared with the related comparison protocols, and fix $PKE_2$ to be RSA. First, our scheme is practical for real-world application.
             In step 2,3, each bidder $\mathbf{B_i}$ needs to encrypt $k(N\!-\!1)$ numbers by RSA($PKE_2$), but, in step 4, each $\mathbf{B_i}$ is supposed to decrypt $k(N\!-\!1)$ ciphertexts.
             For 128-bit security ($|n|=3072$), we test the RSA running time in Go language on PC (AMD PRO A10-8770 R7, 10 COMPUTE CORES 4C+6G) running at 3.5GHz, which shows that the RSA encryption time is about 230ms for $k=10,N=100$ while about 9s is consumed to perform decryptions.
             Thus, the total time in each round is about 10 seconds, which is acceptable in reality even in permissionless blockchain architectures like Bitcoin and Ethereum requiring 10 minutes and 15s (on average) to produce a new block.

             To illustrate the efficiency, we compare our scheme with the two closely related works, DGK \cite{DGK} and Fischlin \cite{Fisch} protocols, which are secure against semi-honest adversary. These three schemes all use an RSA modulus for the encryption. It is certainly reasonable to use the same bit length of the modulus. We say our scheme is a bit faster than (or, at least, comparable to) the other two comparison protocols as claimed in \cite{CEK} that their scheme (based on which we present our TPIC protocol) is faster about 3.5 times than the DKG protocol, which is roughly 10 times \cite{DGK} faster than Fischlin protocol.
             To compare $\eta$-bit numbers, our scheme needs to compute $k\!=\!\lceil\log_d(2^\eta)\rceil$ ciphertexts while the other two protocols require $\eta$ ciphertexts since they are per-bit comparisons. Moreover, to achieve $AND$ operation in Fischlin protocol, each ciphertext is expanded to a tuple with $\iota$ elements where $\iota$ is the soundness error, which adds the communication overhead.
             For round complexity, our scheme runs in 4 rounds, which is the same as DGK, while Fischlin protocol requires 6 rounds \cite{BK2} to get the comparison results.
             We also note that $S\!train$ \cite{BK2} even with all the NIZK related messages removed still requires a larger communication size. The performance comparison is shown in Table~\ref{table_compare} including round complexity (\emph{round}), the number of ciphertexts needed to compare two integers (\emph{C-num}) and communication overhead (\emph{com(bits)}).
             Besides, an instance is given in the \emph{com-eg} row by setting $\eta=30,b=2,d=8,k=10,N=100,|n|=3072,\iota=40$ to show the differences.
      \doublerulesep 0.1pt
      \begin{table}[!t]
      \begin{footnotesize}
      \caption{Performance comparisons with related schemes.}
      \label{table_compare}
         \begin{center}
                \begin{tabular}{ccccc}
                    \toprule
                    \ --   \quad & Fischlin            \quad &DGK     \quad &$Strain$ \quad &Our scheme \\
                    \midrule
                    round   \quad &6                  \quad&4        \quad&4         \quad &4   \\
                    C-num \quad &$\eta \iota  $ \quad&$\eta    $ \quad&$\eta\iota$ \quad &$k$ \\
                    com(bits)\quad &$\eta\iota|n|N$ \quad&$\eta|n|N$ \quad&$\eta\iota|n|N$ \quad &$k|n|N$ \\
                    com-eg(KB)\quad &45000               \quad&1125        \quad&45000      \quad &375   \\
                \bottomrule
                \end{tabular}
         \end{center}
      \end{footnotesize}
      \end{table}

      Here, we note that $S\!train$ is not suitable for a large auction scheme since it may need to create a transaction with heavy payload (not counting NIZK messages, 4MB for 10 bidders and 40MB for 100 bidders) while a light transaction less than 400KB is required in our scheme. An auction can be done in 5 blocks (an additional block to broadcast the winner) in several minutes considering there may be complaints submitted. If there is a complaint, the auctioneer $\mathbf{A}$ needs to handle it and publishes complaint result in another block wasting a block of time.
      At the beginning of the auction, the auctioneer $\mathbf{A}$ can set a deadline (e.g. several blocks in the future) for bidders to submit commitments and deposits similar to registration in the real.
      Then, each bidder $\mathbf{B_i}$ submits a commitment to his bid via a transaction transferring some deposits to $\mathbf{A}$. The deposits can be redeemed by $\mathbf{B_i}$ revealing his bid or refunded by the auctioneer at the end of the auction. Since the auctioneer is assumed to be semi-honest, the honest bidders will not lose their deposits.

\section{Conclusions}\label{sect-conclu}
            In this paper, three-party comparison protocols are given, which are executed among a judge and two competitors. The judge learns the comparison results and the ranking while protecting the private integers of competitors. As applications of three-party comparisons, multi-party integer comparisons including joint bidding and online auction, are presented. All schemes run in a
            constant number of blocks with the help of blockchain. Besides, blockchain-based commitment is used to encourage the competitors to perform correctly.
           Security proofs of the comparisons  are also given. The complexity analysis of the auction shows that our scheme performs well with lower communication overhead and comparable time cost, compared with related designs.

\bibliographystyle{fundam}
\bibliography{00MPIC}

\begin{thebibliography}{10}
\providecommand{\url}[1]{\texttt{#1}}
\providecommand{\urlprefix}{URL }
\expandafter\ifx\csname urlstyle\endcsname\relax
  \providecommand{\doi}[1]{doi:\discretionary{}{}{}#1}\else
  \providecommand{\doi}{doi:\discretionary{}{}{}\begingroup
  \urlstyle{rm}\Url}\fi
\providecommand{\eprint}[2][]{\url{#2}}

\bibitem{Yao86}
Yao AC.
\newblock How to Generate and Exchange Secrets (Extended Abstract).
\newblock In: 27th Annual Symposium on Foundations of Computer Science. {IEEE}
  Computer Society, 1986 pp. 162--167.

\bibitem{KSS09}
Kolesnikov V, Sadeghi A, Schneider T.
\newblock Improved Garbled Circuit Building Blocks and Applications to Auctions
  and Computing Minima.
\newblock In: Cryptology and Network Security, {CANS} 2009, Proceedings, volume
  5888. Springer, 2009 pp. 1--20.

\bibitem{AIKW15}
Applebaum B, Ishai Y, Kushilevitz E, Waters B.
\newblock Encoding Functions with Constant Online Rate or How to Compress
  Garbled Circuits Keys.
\newblock In: Advances in Cryptology - {CRYPTO} 2013 - Proceedings, volume
  8043. Springer, 2013 pp. 166--184.

\bibitem{CO15}
Chou T, Orlandi C.
\newblock The Simplest Protocol for Oblivious Transfer.
\newblock In: {LATINCRYPT} 2015 - 4th International Conference on Cryptology
  and Information Security in Latin America, Proceedings, volume 9230.
  Springer, 2015 pp. 40--58.

\bibitem{Fisch}
Fischlin M.
\newblock A Cost-Effective Pay-Per-Multiplication Comparison Method for
  Millionaires.
\newblock In: {CT-RSA} 2001, Proceedings, volume 2020. Springer, 2001 pp.
  457--472.

\bibitem{DGK}
Damg{\aa}rd I, Geisler M, Kr{\o}igaard M.
\newblock Efficient and Secure Comparison for On-Line Auctions.
\newblock In: Information Security and Privacy, 12th Australasian Conference,
  {ACISP} 2007, Proceedings, volume 4586. Springer, 2007 pp. 416--430.

\bibitem{CEK}
Carlton R, Essex A, Kapulkin K.
\newblock Threshold Properties of Prime Power Subgroups with Application to
  Secure Integer Comparisons.
\newblock In: {CT-RSA} 2018 - Proceedings, volume 10808. Springer, 2018 pp.
  137--156.

\bibitem{AbspoelBSV19}
Abspoel M, Bouman NJ, Schoenmakers B, de~Vreede N.
\newblock Fast Secure Comparison for Medium-Sized Integers and Its Application
  in Binarized Neural Networks.
\newblock In: {CT-RSA} 2019 - Proceedings, volume 11405. Springer, 2019 pp.
  453--472.

\bibitem{BST20}
Bourse F, Sanders O, Traor{\'{e}} J.
\newblock Improved Secure Integer Comparison via Homomorphic Encryption.
\newblock In: {CT-RSA} 2020 - Proceedings, volume 12006 of \emph{Lecture Notes
  in Computer Science}. Springer, 2020 pp. 391--416.

\bibitem{Eskeland20}
Eskeland S.
\newblock Privacy-Preserving Greater-Than Integer Comparison without Binary
  Decomposition.
\newblock In: Proceedings of the 17th International Joint Conference on
  e-Business and Telecommunications, {ICETE} 2020 - Volume 2: SECRYPT.
  ScitePress, 2020 pp. 340--348.

\bibitem{GMW87}
Goldreich O, Micali S, Wigderson A.
\newblock How to Play any Mental Game or {A} Completeness Theorem for Protocols
  with Honest Majority.
\newblock In: Proceedings of the 19th Annual {ACM} Symposium on Theory of
  Computing, 1987, New York, New York, {USA}. {ACM}, 1987 pp. 218--229.

\bibitem{Bran}
Brandt F.
\newblock Fully Private Auctions in a Constant Number of Rounds.
\newblock In: Financial Cryptography, 7th International Conference, {FC} 2003,
  volume 2742. Springer, 2003 pp. 223--238.

\bibitem{GY}
Galal HS, Youssef AM.
\newblock Verifiable Sealed-Bid Auction on the Ethereum Blockchain.
\newblock In: Financial Cryptography and Data Security - {FC} 2018
  International Workshops, volume 10958. Springer, 2018 pp. 265--278.

\bibitem{TianSMAA21}
Tian Y, Song B, Ma T, Al{-}Dhelaan A, Al{-}Dhelaan M.
\newblock Bi-Tier Differential Privacy for Precise Auction-Based People-Centric
  IoT Service.
\newblock \emph{{IEEE} Access}, 2021.
\newblock \textbf{9}:55036--55044.

\bibitem{TejaM21}
Teja PR, Mishra PK.
\newblock Sealed Bid Single Price Auction Model (SBSPAM)-Based Resource
  Allocation for 5G Networks.
\newblock \emph{Wirel. Pers. Commun.}, 2021.
\newblock \textbf{116}(3):2633--2650.

\bibitem{NiCXZZ21}
Ni T, Chen Z, Xu G, Zhang S, Zhong H.
\newblock Differentially Private Double Auction with Reliability-Aware in
  Mobile Crowd Sensing.
\newblock \emph{Ad Hoc Networks}, 2021.
\newblock \textbf{114}:102450.

\bibitem{SarencheSAA21}
Sarenche R, Salmasizadeh M, Ameri MH, Aref MR.
\newblock A secure and privacy-preserving protocol for holding double auctions
  in smart grid.
\newblock \emph{Inf. Sci.}, 2021.
\newblock \textbf{557}:108--129.

\bibitem{Snc}
S{\'{a}}nchez DC.
\newblock Raziel: Private and Verifiable Smart Contracts on Blockchains.
\newblock \emph{CoRR}, 2018.
\newblock \textbf{abs/1807.09484}.
\newblock \urlprefix\url{http://arxiv.org/abs/1807.09484}.

\bibitem{KMSWP}
Kosba AE, Miller A, Shi E, Wen Z, Papamanthou C.
\newblock Hawk: The Blockchain Model of Cryptography and Privacy-Preserving
  Smart Contracts.
\newblock In: {IEEE} Symposium on Security and Privacy, {SP} 2016. 2016 pp.
  839--858.

\bibitem{GM82}
Goldwasser S, Micali S.
\newblock Probabilistic Encryption and How to Play Mental Poker Keeping Secret
  All Partial Information.
\newblock In: Proceedings of the 14th Annual {ACM} Symposium on Theory of
  Computing. {ACM}, 1982 pp. 365--377.

\bibitem{BF1997}
Boneh D, Franklin MK.
\newblock Efficient Generation of Shared {RSA} Keys (Extended Abstract).
\newblock In: Advances in Cryptology - {CRYPTO} '97, Proceedings, volume 1294.
  Springer, 1997 pp. 425--439.

\bibitem{BK2}
Blass E, Kerschbaum F.
\newblock Strain: {A} Secure Auction for Blockchains.
\newblock In: Computer Security - 23rd European Symposium on Research in
  Computer Security, {ESORICS} 2018, Proceedings, volume 11098. Springer, 2018
  pp. 87--110.

\bibitem{MQL}
Ma J, Qi B, Lv K.
\newblock Fully private auctions for the highest bid.
\newblock In: Proceedings of the {ACM} Turing Celebration Conference - China,
  {ACM} {TUR-C} 2019. {ACM}, 2019 pp. 64:1--64:6.

\bibitem{MQL2}
Ma J, Qi B, Lv K.
\newblock Constant-round Auction with Insulated Bidders.
\newblock \emph{SCIENCE CHINA Information Sciences}.
\newblock
  \urlprefix\url{https://www.sciengine.com/publisher/scp/journal/SCIS/doi/10.1007/s11432-019-2666-8?slug=abstract}.

\bibitem{CL02}
Lorange P, Contractor F.
\newblock Cooperative Strategies In International Business.
\newblock volume~14. 1989 pp. 465--467.

\bibitem{ADMM}
Andrychowicz M, Dziembowski S, Malinowski D, Mazurek L.
\newblock Secure Multiparty Computations on Bitcoin.
\newblock In: 2014 {IEEE} Symposium on Security and Privacy, {SP} 2014. {IEEE}
  Computer Society, 2014 pp. 443--458.

\bibitem{ST}
Schoenmakers B, Tuyls P.
\newblock Practical Two-Party Computation Based on the Conditional Gate.
\newblock In: Advances in Cryptology - {ASIACRYPT} 2004, Proceedings, volume
  3329. Springer, 2004 pp. 119--136.

\bibitem{SYY99}
Sander T, Young AL, Yung M.
\newblock Non-Interactive CryptoComputing For NC\({}^{\mbox{1}}\).
\newblock In: 40th Annual Symposium on Foundations of Computer Science, {FOCS}
  '99, 17-18 October, 1999, New York, NY, {USA}. {IEEE} Computer Society, 1999
  pp. 554--567.

\bibitem{OKST97}
Ogata W, Kurosawa K, Sako K, Takatani K.
\newblock Fault tolerant anonymous channel.
\newblock In: Information and Communication Security, First International
  Conference, ICICS'97, Proceedings, volume 1334. Springer, 1997 pp. 440--444.

\bibitem{RW04}
Reiter MK, Wang X.
\newblock Fragile mixing.
\newblock In: Proceedings of the 11th {ACM} Conference on Computer and
  Communications Security, {CCS} 2004. {ACM}, 2004 pp. 227--235.

\bibitem{Katz03}
Katz J.
\newblock Efficient and Non-malleable Proofs of Plaintext Knowledge and
  Applications.
\newblock In: Advances in Cryptology - {EUROCRYPT} 2003, Proceedings, volume
  2656. Springer, 2003 pp. 211--228.

\bibitem{GQ88-RSA-zk}
Guillou LC, Quisquater J.
\newblock A Practical Zero-Knowledge Protocol Fitted to Security Microprocessor
  Minimizing Both Transmission and Memory.
\newblock In: Advances in Cryptology - {EUROCRYPT} '88, Proceedings, volume
  330. Springer, 1988 pp. 123--128.

\end{thebibliography}


\end{document}